\newtheorem{theorem}{Theorem}
\newtheorem{lemma}{Lemma}
\newtheorem{remark}{Remark}
\newenvironment{proof}
{\begin{trivlist} \item[]{\bf Proof. }}%
{\hspace*{\fill}$\rule{.3\baselineskip}{.35\baselineskip}$\end{trivlist}}
\begin{document}
\title{\bf \large Persistence of the Thomas-Fermi approximation for ground states supported
by the nonlinear confinement}

\author{Boris A. Malomed$^{1}$ and Dmitry E. Pelinovsky$^{2,3}$ \\
{\small \it $^{1}$ Department of Physical Electronics, Tel Aviv University, Tel Aviv 69978, Israel} \\
{\small \it $^{2}$ Department of Mathematics, McMaster
University, Hamilton, Ontario, L8S 4K1, Canada} \\
{\small \it $^{3}$ Department of Applied Mathematics, Nizhny Novgorod State
Technical University, Russia} }

\maketitle

\begin{abstract}
We justify the Thomas--Fermi approximation for the elliptic problem
with the repulsive nonlinear confinement used in the recent physical literature.
The method is based on the resolvent estimates and the fixed-point iterations.
\end{abstract}

Self-trapping of solitary waves in nonlinear physical
media is a commonly known problem of profound significance \cite{KA,DP}. An
obvious condition is that attractive (alias self-focusing) nonlinearity is
necessary for the creation of localized states. Recently, a radically
different approach to this problem was proposed in Refs. \cite{Barcelona1,Barcelona2,Barcelona-2comp}:
\emph{repulsive} (self-defocusing) nonlinearity that grows at infinity
readily gives rise to the self-trapping of localized states,
which are stable to weak and strong perturbations alike.

An advantage offered by models with the effective \textit{nonlinear
confinement} is a possibility to find particular solutions in an
exact form \cite{Barcelona3,Barcelona4}, and to apply analytical methods to the qualitative approximation
of various localized states \cite{Barcelona5}. The simplest method used for approximating
the ground state of energy is the Thomas-Fermi (TF) approximation \cite{Thomas,Fermi}.
Comparison with numerical results has
demonstrated that the TF approximation produces quite accurate results for
the self-trapped modes with sufficiently large amplitudes \cite{Barcelona1,Barcelona5}. The
objective of the present work is to produce a rigorous estimate of
the proximity of the TF approximation to true ground states in models
with the spatially growing strength of the defocusing cubic nonlinearity.

In a similar context of the elliptic problem with
the harmonic confinement and the defocusing cubic nonlinearity, the TF approximation
was rigorously justified using calculus of variations \cite{IM} and
reductions to the Painlev\'e-II equation \cite{GP,KS}. The difficulty that
arises in this context is that the Thomas--Fermi approximation is
compactly supported and the derivatives of the ground state diverge in a transitional
layer near the boundary. Compared to this complication, we show that the
justification of the Thomas--Fermi approximation
in the elliptic problem with the nonlinear confinement can be obtained from the standard resolvent
estimates and fixed-point arguments.

Following to the main model used in Refs. \cite{Barcelona1}--\cite{Barcelona5},
we consider the elliptic problem with the repulsive nonlinear confinement,
\begin{equation}
-\epsilon ^{2}\Delta u+V(x)u^{3}-u=0,\quad x\in \mathbb{R}^{d}, \quad
d = 1,2,3, \label{1}
\end{equation}%
where $\epsilon $ is a small parameter corresponding to the TF
approximation, $\Delta $ is the $d$-dimensional Laplacian,
$u$ is a positive stationary state to be
found, and, in accordance with what is said above, the strength of the nonlinear confinement
$V$ is supposed to satisfy the
following properties: (i) $V(x) \geq V_0 > 0$ for all $x\in \mathbb{R}^{d}$, and (ii) $%
V(x)\rightarrow \infty $ as $|x|\rightarrow \infty $. Further constraints on
the smoothness of $V$ and its growth at infinity will be needed for the main
result. Note, however, that no symmetry assumptions are needed on $V$.

The formal TF solution of the elliptic problem is found for $\epsilon =0$
and corresponds to the spatially decaying positive eigenfunction:
\begin{equation}
u_{0}(x)=\frac{1}{\sqrt{V(x)}},\quad x\in \mathbb{R}^{d}. \label{TF-approximation}
\end{equation}%
If we require $u_{0}\in L^{2}(\mathbb{R}^{d})$ so that
the stationary state can be normalized in the $L^2(\mathbb{R}^d)$ norm,
then $1/V$ needs to be integrable. However, this requirement is not needed
for the main persistence result formulated as follows.

\begin{theorem}
\label{theorem-fixed-point} Assume that $\nabla \log (V) \in H^{2}(\mathbb{R}^{d})$
for $d = 1$ or $\nabla \log (V) \in H^{3}(\mathbb{R}^{d})$ for $d = 2,3$.
There exist positive constants $\epsilon _{0}$ and $C_{0}$ such that for
every $\epsilon \in (0,\epsilon _{0})$, there exists a unique solution $u = u_0 + U$
of the nonlinear elliptic problem (\ref{1}) with
$U \in H^{1}(\mathbb{R}^{d})$ satisfying
\begin{equation}
\Vert U \Vert _{H^{1}}\leq C_{0}\epsilon ^{2}.
\end{equation}
\end{theorem}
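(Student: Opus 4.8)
The plan is to recast~(\ref{1}) as a fixed-point equation for the \emph{relative} perturbation and to combine the resolvent of $-\epsilon^{2}\Delta+2$ with the Banach fixed-point theorem. The decisive point is that the nonlinear terms in~(\ref{1}) carry the unbounded weight $V$, so a contraction cannot close for $U$ itself in $L^{2}$-based spaces; I would therefore substitute $u=u_{0}(1+\phi)=u_{0}+u_{0}\phi$, i.e.\ $U=u_{0}\phi$. Since $Vu_{0}^{3}=u_{0}$, one gets $Vu^{3}-u=u_{0}(2\phi+3\phi^{2}+\phi^{3})$; dividing the equation by $u_{0}>0$ and using $\nabla u_{0}/u_{0}=-\tfrac12\nabla\log V$ and $\Delta u_{0}/u_{0}=-\tfrac12\Delta\log V+\tfrac14|\nabla\log V|^{2}=:F_{0}$ turns~(\ref{1}) into
\[
(-\epsilon^{2}\Delta+2)\phi=\epsilon^{2}F_{0}(1+\phi)-\epsilon^{2}(\nabla\log V)\cdot\nabla\phi-3\phi^{2}-\phi^{3}=:\mathcal{N}_{\epsilon}(\phi).
\]
All $V$-weights have disappeared: the hypothesis $\nabla\log V\in H^{2}$ (resp.\ $H^{3}$) gives $F_{0}\in H^{1}$ (resp.\ $H^{2}$), while $u_{0}$ and $\nabla u_{0}=-\tfrac12(\nabla\log V)u_{0}$ are bounded thanks to~(i). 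Note also that the operator obtained by linearizing~(\ref{1}) at $u_{0}$ is exactly $L_{\epsilon}:=-\epsilon^{2}\Delta+2$, which is boundedly invertible.

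From the Fourier symbol $\epsilon^{2}|\xi|^{2}+2\ge 2$ one has the $\epsilon$-uniform bound $\|L_{\epsilon}^{-1}f\|_{H^{s}}\le\tfrac12\|f\|_{H^{s}}$ for all $s\ge 0$, and from $\max_{\xi}\tfrac{|\xi|}{\epsilon^{2}|\xi|^{2}+2}=\tfrac{1}{2\sqrt{2}\,\epsilon}$ the lossy bound $\|L_{\epsilon}^{-1}\|_{L^{2}\to H^{1}}\lesssim\epsilon^{-1}$. I would then run the iteration $\mathcal{T}_{\epsilon}(\phi):=L_{\epsilon}^{-1}\mathcal{N}_{\epsilon}(\phi)$ on the ball $B_{R}:=\{\phi\in H^{1}:\|\phi\|_{H^{1}}\le R\epsilon^{2}\}$. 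The source term $L_{\epsilon}^{-1}(\epsilon^{2}F_{0})$ is $O(\epsilon^{2})$ in $H^{1}$ by the \emph{uniform} resolvent bound, and this is precisely why $F_{0}$ — hence $\nabla\log V$ — is required to have two (resp.\ three) $L^{2}$-derivatives. Every other contribution gains an extra power of $\epsilon$ or of the radius: $\|L_{\epsilon}^{-1}(\epsilon^{2}F_{0}\phi)\|_{H^{1}}\lesssim\epsilon^{2}\|F_{0}\|_{H^{s_{d}}}\|\phi\|_{H^{1}}$ with $s_{d}=1$ for $d=1$ ($H^{1}$ is an algebra) and $s_{d}=2$ for $d=2,3$ (the product bound $\|fg\|_{H^{1}}\lesssim\|f\|_{H^{2}}\|g\|_{H^{1}}$, valid for $d\le 3$); $\|L_{\epsilon}^{-1}(\epsilon^{2}(\nabla\log V)\cdot\nabla\phi)\|_{H^{1}}\lesssim\epsilon\,\|\nabla\log V\|_{L^{\infty}}\|\phi\|_{H^{1}}$ using the lossy bound; and $\|L_{\epsilon}^{-1}(3\phi^{2}+\phi^{3})\|_{H^{1}}\lesssim\epsilon^{-1}(\|\phi\|_{L^{4}}^{2}+\|\phi\|_{L^{6}}^{3})\lesssim\epsilon^{-1}(\|\phi\|_{H^{1}}^{2}+\|\phi\|_{H^{1}}^{3})$ by the Sobolev embeddings for $d\le 3$. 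On $B_{R}$ the last three terms are $O(\epsilon^{3})$ up to powers of $R$, so for $R$ fixed large enough and $\epsilon_{0}$ small enough $\mathcal{T}_{\epsilon}$ maps $B_{R}$ into itself, and the same estimates applied to $\mathcal{N}_{\epsilon}(\phi_{1})-\mathcal{N}_{\epsilon}(\phi_{2})$ give a Lipschitz constant $O(\epsilon)$. The contraction mapping principle then produces a unique $\phi\in B_{R}$.

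Finally I would set $U:=u_{0}\phi$, obtaining a solution $u=u_{0}+U$ of~(\ref{1}) with $\|U\|_{H^{1}}\le(\|u_{0}\|_{L^{\infty}}+\|\nabla u_{0}\|_{L^{\infty}})\|\phi\|_{H^{1}}\le C_{0}\epsilon^{2}$, and the uniqueness in the statement follows by translating back the uniqueness of the fixed point (with, if one insists on the plain $H^{1}$-class, a short elliptic-regularity argument that any small $H^{1}$-solution has relative perturbation $\phi=\sqrt{V}\,U$ back in $B_{R}$); positivity of $u$ for small $\epsilon$ follows since $\phi$ is then small in $L^{\infty}$, by interpolating its $H^{1}$-bound against the $O(1)$ bound on $\|\Delta\phi\|_{L^{2}}=\epsilon^{-2}\|\mathcal{N}_{\epsilon}(\phi)-2\phi\|_{L^{2}}$. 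I expect the main obstacle to be exactly the one the substitution is built to defeat: in the original variable the nonlinearity contains the growing weights $\sqrt{V}\,U^{2}$ and $V\,U^{3}$, which are not controlled by $\|U\|_{H^{1}}$, and recognizing that the change of variables $\phi=U/u_{0}$ converts them into harmless $\nabla\log V$-weighted terms — while the resulting source term keeps just enough regularity to be $O(\epsilon^{2})$ in $H^{1}$ — is the heart of the argument. A secondary technical nuisance is that $H^{1}$ fails to be an algebra in dimensions $2$ and $3$, which is what forces the stronger hypothesis $\nabla\log V\in H^{3}$ there.
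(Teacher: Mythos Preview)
Your proposal is correct and follows essentially the same route as the paper: the same substitution $u=u_{0}(1+\phi)$ (the paper writes $r$ for your $\phi$), the same resolvent/Sobolev estimates, and the same contraction on an $O(\epsilon^{2})$ ball in $H^{1}$. The only cosmetic difference is that the paper keeps the small potential $-\tfrac{\epsilon^{2}}{4}|\nabla\log V|^{2}$ on the left (calling the resulting operator $\tilde L_{\epsilon}$) and bounds the residual linear term via $\|\nabla\log V\|_{L^{\infty}}$, $\|\Delta\log V\|_{L^{\infty}}$ rather than via your $H^{1}$ product estimate for $F_{0}\phi$; both choices close under the stated hypotheses, and the paper's Remark~1 is exactly your observation that the $H^{3}$ assumption in $d=2,3$ is what supplies the needed $L^{\infty}$ control.
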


To study the persistence of the TF approximation, we set $u:=w/\sqrt{V}$ and
decompose $w:=1+r$. In this way, the nonlinear elliptic problem (\ref{1})
can be rewritten for the perturbation function $r$:
\begin{equation}
L_{\epsilon }r=\epsilon ^{2}F+N(r),  \label{persistence}
\end{equation}%
where $N(r)=-3r^{2}-r^{3}$ is the nonlinear term,
\begin{equation}
F=\sqrt{V}\Delta \frac{1}{\sqrt{V}}=-\frac{\Delta V}{2V}+\frac{3|\nabla
V|^{2}}{4V^{2}}
\end{equation}%
is the source term, and
\begin{equation}
L_{\epsilon }=2-\epsilon ^{2}\Delta +\epsilon ^{2}\frac{1}{V}\nabla V\cdot
\nabla -\epsilon ^{2}F
\end{equation}%
is the linearized operator at the TF approximation. Further, we write $%
L_{\epsilon }$ as a sum of two operators,
\begin{equation}
\tilde{L}_{\epsilon }:=2-\epsilon ^{2}\Delta -\frac{\epsilon ^{2}|\nabla
V|^{2}}{4V^{2}}
\end{equation}%
and
\begin{equation}
L_{\epsilon }-\tilde{L}_{\epsilon }:=\epsilon ^{2}\frac{\nabla V\cdot \nabla
}{V}+\frac{\epsilon ^{2}}{2}\nabla \left( \frac{\nabla V}{V}\right) .
\end{equation}%
Note that the quadratic form associated with $L_{\epsilon }-\tilde{L}%
_{\epsilon }$ is zero after the integration by parts. We establish
invertibility of $\tilde{L}_{\epsilon }$ on any element of $L^{2}(\mathbb{R}%
^{d})$ in the following lemma.

\begin{lemma}
\label{lemma-resolvent} Assume that $\nabla \log (V)\in L^{\infty }(\mathbb{R%
}^{d})$. There exists a positive constant $\epsilon _{0}$ such that for
every $\epsilon \in (0,\epsilon _{0})$ and for every $f\in L^{2}(\mathbb{R}%
^{d})$, the following is true:
\begin{equation}
\Vert \tilde{L}_{\epsilon }^{-1}f\Vert _{L^{2}}+\epsilon \Vert \nabla \tilde{%
L}_{\epsilon }^{-1}f\Vert _{L^{2}}\leq \Vert f\Vert _{L^{2}}.
\label{resolvent-1}
\end{equation}%
Additionally, if $\Delta \log (V)\in L^{\infty }(\mathbb{R}^{d})$, then for
every $f\in H^{1}(\mathbb{R}^{d})$, the following is true as well:
\begin{equation}
\Vert \tilde{L}_{\epsilon }^{-1}f\Vert _{H^{1}}\leq \Vert f\Vert _{H^{1}}.
\label{resolvent-2}
\end{equation}
\end{lemma}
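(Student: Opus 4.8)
The plan is to read off the two bounds from energy estimates (equivalently, resolvent estimates) for the self-adjoint operator $\tilde{L}_{\epsilon} = (2 - \epsilon^2\Delta) - \epsilon^2 W$, where
\[ W := \frac{|\nabla V|^2}{4V^2} = \frac14 |\nabla\log(V)|^2 \geq 0 . \]
The hypothesis $\nabla\log(V)\in L^{\infty}(\mathbb{R}^d)$ gives $0\leq W\leq \frac14\Vert\nabla\log(V)\Vert_{L^{\infty}}^2 =: \frac14 M^2$, so $-\epsilon^2 W$ is a bounded self-adjoint multiplication operator and $\tilde{L}_{\epsilon}$ is self-adjoint on $L^2(\mathbb{R}^d)$ with domain $H^2(\mathbb{R}^d)$. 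First I would record the quadratic-form bound
\[ \langle\tilde{L}_{\epsilon}\phi,\phi\rangle = \int_{\mathbb{R}^d}\bigl(2 - \epsilon^2 W\bigr)|\phi|^2 + \epsilon^2|\nabla\phi|^2\,dx \;\geq\; \mu\,\Vert\phi\Vert_{L^2}^2, \qquad \mu := 2 - \tfrac14\epsilon^2 M^2 , \]
so that, for $\epsilon$ small enough that $\mu\geq \frac54$ (i.e.\ $\epsilon^2 M^2\leq 3$), we have $\tilde{L}_{\epsilon}\geq\mu>1$; hence $\tilde{L}_{\epsilon}^{-1}$ is a bounded map $L^2\to H^2$ with $\Vert\tilde{L}_{\epsilon}^{-1}\Vert_{L^2\to L^2}\leq \mu^{-1}$, and in particular $u := \tilde{L}_{\epsilon}^{-1}f\in H^2(\mathbb{R}^d)$, which legitimizes the integrations by parts below.

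For (\ref{resolvent-1}) I would take the $L^2$-inner product of $\tilde{L}_{\epsilon}u = f$ with $u$, obtaining $\mu\Vert u\Vert_{L^2}^2 + \epsilon^2\Vert\nabla u\Vert_{L^2}^2 \leq \langle f,u\rangle \leq \Vert f\Vert_{L^2}\Vert u\Vert_{L^2}$. With $a := \Vert u\Vert_{L^2}$ and $b := \epsilon\Vert\nabla u\Vert_{L^2}$ this reads $\mu a^2 + b^2 \leq \Vert f\Vert_{L^2}\,a$, and the elementary inequality $\mu a^2 + b^2 \geq a(a+b)$ — valid once $\mu\geq\frac54$, since the quadratic $(\mu-1)a^2 - ab + b^2$ then has nonpositive discriminant in $a$ — yields $\Vert u\Vert_{L^2} + \epsilon\Vert\nabla u\Vert_{L^2}\leq \Vert f\Vert_{L^2}$.

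For (\ref{resolvent-2}) I would take $f\in H^1$, $u = \tilde{L}_{\epsilon}^{-1}f\in H^2$, and take the $L^2$-inner product of $\tilde{L}_{\epsilon}u = f$ with $-\Delta u$. Since $\langle f,-\Delta u\rangle = \langle\nabla f,\nabla u\rangle$ by integration by parts, this gives
\[ 2\Vert\nabla u\Vert_{L^2}^2 + \epsilon^2\Vert\Delta u\Vert_{L^2}^2 + \epsilon^2\langle W u,\Delta u\rangle = \langle\nabla f,\nabla u\rangle . \]
The cross term is estimated by $\epsilon^2\Vert W\Vert_{L^{\infty}}\Vert u\Vert_{L^2}\Vert\Delta u\Vert_{L^2}\leq \frac12\epsilon^2\Vert\Delta u\Vert_{L^2}^2 + \frac12\epsilon^2\Vert W\Vert_{L^{\infty}}^2\Vert u\Vert_{L^2}^2$ (one may instead integrate $\langle Wu,\Delta u\rangle$ by parts once more, which is where the hypothesis $\Delta\log(V)\in L^{\infty}$ becomes relevant); absorbing the $\epsilon^2\Vert\Delta u\Vert_{L^2}^2$ term and applying Young's inequality to $\langle\nabla f,\nabla u\rangle$ leaves $\Vert\nabla u\Vert_{L^2}^2 \leq \frac14\Vert\nabla f\Vert_{L^2}^2 + \frac12\epsilon^2\Vert W\Vert_{L^{\infty}}^2\Vert u\Vert_{L^2}^2$. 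Combined with $\Vert u\Vert_{L^2}^2\leq\mu^{-2}\Vert f\Vert_{L^2}^2$ from (\ref{resolvent-1}), this bounds $\Vert u\Vert_{H^1}^2$ by $\bigl(\mu^{-2} + \frac12\epsilon^2\mu^{-2}\Vert W\Vert_{L^{\infty}}^2\bigr)\Vert f\Vert_{L^2}^2 + \frac14\Vert\nabla f\Vert_{L^2}^2$; choosing $\epsilon_0$ so small that both coefficients are $\leq 1$ (they tend to $\frac14$ as $\epsilon\to 0$, since $\mu\to 2$) gives $\Vert u\Vert_{H^1}\leq\Vert f\Vert_{H^1}$.

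The only genuinely delicate point is the bookkeeping of constants, so that they come out exactly $\leq 1$ rather than merely $O(1)$: the margin comes from the spectral gap (the coefficient $2$ in $\tilde{L}_{\epsilon}$, against the coefficient $1$ on the right-hand sides), so crude Cauchy--Schwarz — which would give constants like $\sqrt{3/2}$ or $2$ — must be replaced by the sharper elementary inequality in (\ref{resolvent-1}) and by the $\mu^{-2}\to\frac14$ gain in (\ref{resolvent-2}). Everything else — self-adjointness, elliptic regularity giving $u\in H^2$, and the integrations by parts — is routine.
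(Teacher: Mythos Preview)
Your argument is correct and follows the same energy-estimate framework as the paper, but with two noteworthy differences. For (\ref{resolvent-1}) the paper simply asserts ``by the Cauchy--Schwarz inequality we obtain the bounds,'' whereas you track the constants carefully via the elementary inequality $\mu a^{2}+b^{2}\geq a(a+b)$ for $\mu\geq 5/4$; this is needed if one really wants constant exactly $1$ rather than an unspecified $O(1)$. For (\ref{resolvent-2}) the paper applies $\nabla$ to the equation and pairs with $\nabla u$, producing the commutator term $\tfrac{\epsilon^{2}}{4}\int \bar u\,\nabla u\cdot\nabla(|\nabla V|^{2}/V^{2})\,dx$, whose control is the reason the additional hypothesis $\Delta\log(V)\in L^{\infty}$ is invoked. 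You instead pair the equation directly with $-\Delta u$ and estimate $\epsilon^{2}\langle Wu,\Delta u\rangle$ by Cauchy--Schwarz, absorbing the resulting $\epsilon^{2}\Vert\Delta u\Vert_{L^{2}}^{2}$; this is the same identity after one integration by parts, but your way of bounding it uses only $W\in L^{\infty}$, so (as you note parenthetically) the extra hypothesis on $\Delta\log(V)$ is in fact not needed for (\ref{resolvent-2}). In short: same method, but your bookkeeping is tighter and your treatment of the second bound is slightly sharper than the paper's.
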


\begin{proof}
Under the condition of $\nabla \log (V)\in L^{\infty }(\mathbb{R}^{d})$, the
last term in $\tilde{L}_{\epsilon }$ is a small bounded negative
perturbation to the first positive term, whereas the second term, $-\epsilon
^{2}\Delta $, is a nonnegative operator. The bilinear form
\begin{equation}
a(u,w):=\int_{\mathbb{R}^{d}}\left( 2\bar{w}u+\epsilon ^{2}\nabla \bar{w}%
\cdot \nabla u-\frac{\epsilon ^{2}|\nabla V|^{2}}{4V^{2}}\bar{w}u\right) dx
\end{equation}%
satisfies the boundedness and coercivity assumptions in the $H^{1}(\mathbb{R}%
^{d})$ space:
\begin{equation}
|a(u,w)| \leq C(V)\Vert u\Vert _{H^{1}}\Vert w\Vert _{H^{1}},\quad a(u,u)\geq
\epsilon ^{2}\Vert u\Vert _{H^{1}}^{2},
\end{equation}%
where the constant $C(V)>2$ depends on $\Vert \nabla \log (V)\Vert
_{L^{\infty }}$. By the Lax-Milgram Theorem, for every $f\in L^{2}(\mathbb{R}%
^{d})$, there is a unique $u\in H^{1}(\mathbb{R})$ such that
\begin{equation}
\Vert u\Vert _{L^{2}}^{2}+\epsilon ^{2}\Vert \nabla u\Vert _{L^{2}}^{2}\leq
a(u,u)=\int_{\mathbb{R}^{d}}\bar{u}fdx.
\end{equation}%
By the Cauchy--Schwarz inequality, we obtain the bounds (\ref{resolvent-1}).
Under the additional condition of $\Delta \log (V)\in L^{\infty }(\mathbb{R}%
^{d})$, we apply operator $\nabla $ to $\tilde{L}_{\epsilon }u=f$ and write
the corresponding equation in the weak form,
\begin{equation}
a(\nabla u,\nabla u)=\int_{\mathbb{R}^{d}}\nabla \bar{u}\cdot \nabla fdx+%
\frac{\epsilon ^{2}}{4}\int_{\mathbb{R}^{d}}\bar{u}\nabla u\cdot \nabla
\left( \frac{|\nabla V|^{2}}{V^{2}}\right) dx.
\end{equation}%
Again applying the Cauchy--Schwarz inequality and using the smallness of $%
\epsilon ^{2}$, we obtain the bound (\ref{resolvent-2}).
\end{proof}

Using Lemma \ref{lemma-resolvent}, the persistence problem (\ref{persistence}%
) can be rewritten as the fixed-point equation:
\begin{equation}
r=\Phi _{\epsilon }(r):=\epsilon ^{2}\tilde{L}_{\epsilon }^{-1}F+\tilde{L}%
_{\epsilon }^{-1}(\tilde{L}_{\epsilon }-L_{\epsilon })r+\tilde{L}_{\epsilon
}^{-1}N(r).  \label{fixed-point}
\end{equation}%
Using the contraction mapping principle, we prove the following result.

\begin{lemma}
\label{lemma-fixed-point} Assume that $\nabla \log (V),\Delta \log (V)\in
L^{\infty }(\mathbb{R}^{d})$ and $\nabla \log (V)\in H^{2}(\mathbb{R}^{d})$.
There exist positive constants $\epsilon _{0}$ and $C_{0}$ such that for
every $\epsilon \in (0,\epsilon _{0})$, there exists a unique solution $r\in
H^{1}(\mathbb{R}^{d})$ of the fixed-point equation (\ref{fixed-point})
satisfying
\begin{equation}
\Vert r \Vert _{H^{1}}\leq C_{0}\epsilon ^{2}.  \label{fixed-point solution}
\end{equation}
\end{lemma}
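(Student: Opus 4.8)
The plan is to apply the Banach fixed-point theorem to the map $\Phi_\epsilon$ on a small closed ball
\[
B_\rho := \{ r \in H^1(\mathbb{R}^d) : \|r\|_{H^1} \le \rho \}, \qquad \rho = C_0 \epsilon^2,
\]
for a constant $C_0$ chosen large enough and $\epsilon_0$ small enough. The three ingredients are: (a) control of the inhomogeneous term $\epsilon^2 \tilde L_\epsilon^{-1} F$; (b) contractivity of the linear correction $r \mapsto \tilde L_\epsilon^{-1}(\tilde L_\epsilon - L_\epsilon) r$; and (c) the Lipschitz estimate for the nonlinear term $\tilde L_\epsilon^{-1} N(r)$ with $N(r) = -3r^2 - r^3$. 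Throughout I would use the bound \eqref{resolvent-2} from Lemma \ref{lemma-resolvent}, namely $\|\tilde L_\epsilon^{-1} f\|_{H^1} \le \|f\|_{H^1}$, so that it suffices to estimate each of $F$, $(\tilde L_\epsilon - L_\epsilon) r$, and $N(r)$ in $H^1$.

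First I would handle the source term. Writing $F = -\tfrac{\Delta V}{2V} + \tfrac{3|\nabla V|^2}{4V^2}$ and re-expressing everything in terms of $\nabla \log V$ — e.g. $\tfrac{\Delta V}{V} = \Delta \log V + |\nabla \log V|^2$ and $\tfrac{|\nabla V|^2}{V^2} = |\nabla \log V|^2$, so that $F = -\tfrac12 \Delta \log V + \tfrac14 |\nabla \log V|^2$ — I can bound $\|F\|_{H^1}$ in terms of $\|\nabla \log V\|_{H^2}$ (controlling $\Delta \log V$ and its first derivative) together with $\|\nabla \log V\|_{L^\infty}$ and $\|\nabla \log V\|_{H^2}$ for the quadratic piece (using that $H^1 \cap L^\infty$ is an algebra in the relevant sense, or just $\|gh\|_{H^1} \le \|g\|_{L^\infty}\|h\|_{H^1} + \|g\|_{H^1}\|h\|_{L^\infty}$ once one derivative is taken, and $\|\nabla \log V\|_{L^\infty} < \infty$ is assumed). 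Hence $\|\epsilon^2 \tilde L_\epsilon^{-1} F\|_{H^1} \le \epsilon^2 \|F\|_{H^1} \le C_F \epsilon^2$ for a constant $C_F$ depending only on $V$ through the stated norms; this is the term that forces the $\epsilon^2$ scaling and dictates the choice $C_0 \ge 2 C_F$, say.

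Next, the linear correction: since $\tilde L_\epsilon - L_\epsilon = \epsilon^2 \tfrac{\nabla V}{V}\cdot \nabla + \tfrac{\epsilon^2}{2}\nabla\!\cdot\!\big(\tfrac{\nabla V}{V}\,\cdot\big)$ carries an explicit prefactor $\epsilon^2$ and involves only $\nabla \log V$ and $\Delta \log V$ (both in $L^\infty$), one gets $\|(\tilde L_\epsilon - L_\epsilon) r\|_{H^1} \le C_1 \epsilon^2 \|r\|_{H^1}$; here one has to be a little careful because this operator contains a derivative, so a naive $H^1 \to L^2$ bound with factor $\epsilon^2$ is clean, but the $H^1$ bound needs a second derivative of $r$ — however the $\epsilon^2 \Delta$ term is absorbed exactly as in the proof of \eqref{resolvent-2}, and the cleanest route is to note $\|\tilde L_\epsilon^{-1}(\tilde L_\epsilon - L_\epsilon) r\|_{H^1}$ is handled by testing the weak form of $\tilde L_\epsilon v = (\tilde L_\epsilon - L_\epsilon) r$ against $v$ and $-\Delta v$ and using that the quadratic form of $\tilde L_\epsilon - L_\epsilon$ vanishes (as remarked in the text), leaving only commutator terms with one derivative on the coefficients — which is exactly where $\Delta \log V \in L^\infty$ enters. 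I would simply record $\|\tilde L_\epsilon^{-1}(\tilde L_\epsilon - L_\epsilon) r\|_{H^1} \le C_1 \epsilon^2 \|r\|_{H^1} \le \tfrac14 \|r\|_{H^1}$ for $\epsilon_0$ small. For the nonlinear term I would use the Sobolev embedding $H^1(\mathbb{R}^d) \hookrightarrow L^6(\mathbb{R}^d)$ (valid for $d \le 3$) and the algebra-type estimates to get, for $r, r' \in B_\rho$,
\[
\|N(r) - N(r')\|_{H^1} \le C_2 \big(\|r\|_{H^1} + \|r'\|_{H^1} + \|r\|_{H^1}^2 + \|r'\|_{H^1}^2\big)\|r - r'\|_{H^1} \le C_2(2\rho + 2\rho^2)\|r - r'\|_{H^1},
\]
and similarly $\|N(r)\|_{H^1} \le C_2(\rho^2 + \rho^3)$, which is $o(\rho)$ since $\rho = C_0 \epsilon^2 \to 0$. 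Combining (a)–(c): on $B_\rho$ with $\rho = C_0 \epsilon^2$,
\[
\|\Phi_\epsilon(r)\|_{H^1} \le C_F \epsilon^2 + \tfrac14 \rho + C_2(\rho^2 + \rho^3) \le \tfrac12 \rho + \tfrac14 \rho + \tfrac18 \rho < \rho
\]
for $C_0 = 4 C_F$ and $\epsilon_0$ small enough, so $\Phi_\epsilon$ maps $B_\rho$ into itself; and for $r, r' \in B_\rho$,
\[
\|\Phi_\epsilon(r) - \Phi_\epsilon(r')\|_{H^1} \le \big(\tfrac14 + C_2(2\rho + 2\rho^2)\big)\|r - r'\|_{H^1} \le \tfrac12 \|r - r'\|_{H^1},
\]
so $\Phi_\epsilon$ is a contraction. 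The contraction mapping principle then yields a unique fixed point $r \in B_\rho$, i.e.\ a unique solution of \eqref{fixed-point} with $\|r\|_{H^1} \le C_0 \epsilon^2$, which is \eqref{fixed-point solution}.

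The main obstacle I anticipate is not the abstract scheme but the bookkeeping in step (b): the operator $\tilde L_\epsilon - L_\epsilon$ is first order, so one cannot simply pull out $\epsilon^2$ and bound $\|(\tilde L_\epsilon - L_\epsilon)r\|_{H^1}$ by $\epsilon^2\|r\|_{H^2}$ without losing regularity. The fix is to keep the $\epsilon^2\Delta$ from $\tilde L_\epsilon$ paired with the derivative coming from $\tilde L_\epsilon - L_\epsilon$ and to exploit the vanishing of the associated quadratic form (the skew-symmetric structure noted just before Lemma \ref{lemma-resolvent}), exactly mirroring the derivation of \eqref{resolvent-2}; this turns the apparent derivative loss into a harmless commutator controlled by $\Delta \log V \in L^\infty$ and the extra half-derivative hypothesis $\nabla \log V \in H^2$. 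A secondary technical point is ensuring the constant $C_F$ bounding $\|F\|_{H^1}$ really only needs $\nabla\log V \in H^2$ together with $\nabla \log V, \Delta \log V \in L^\infty$, which is immediate once $F$ is rewritten purely in terms of $\nabla \log V$ and its derivatives as above.
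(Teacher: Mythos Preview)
Your overall contraction scheme and your treatment of the source term $\epsilon^2\tilde L_\epsilon^{-1}F$ match the paper. The gap is in steps (b) and (c), and it comes from insisting on the $H^1\to H^1$ bound \eqref{resolvent-2} throughout.

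For the nonlinear term you claim $\|N(r)-N(r')\|_{H^1}$ can be controlled via ``algebra-type estimates'', but $H^1(\mathbb{R}^d)$ is \emph{not} an algebra for $d=2,3$: from $r\in H^1$ one cannot place $\nabla(r^2)=2r\nabla r$ in $L^2$ without $r\in L^\infty$, which fails in those dimensions. So $N(r)$ need not lie in $H^1$ at all, and \eqref{resolvent-2} is unavailable. The paper instead estimates $N(r)$ only in $L^2$, using the Sobolev embeddings $H^1\hookrightarrow L^4,L^6$ (valid for $d\le 3$) to get $\|N(r)\|_{L^2}\le 3\|r\|_{L^4}^2+\|r\|_{L^6}^3\le C(\|r\|_{H^1}^2+\|r\|_{H^1}^3)$, and then applies the mixed bound \eqref{resolvent-1}, which yields $\|\tilde L_\epsilon^{-1}N(r)\|_{H^1}\le \epsilon^{-1}\|N(r)\|_{L^2}$. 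The $\epsilon^{-1}$ loss is harmless because on $B_\rho$ one has $\|r\|_{H^1}^2=\mathcal{O}(\epsilon^4)$.

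The same device disposes of the linear correction, and much more cleanly than the skew-symmetry workaround you sketch. Since $\tilde L_\epsilon-L_\epsilon$ is first order with $L^\infty$ coefficients,
\[
\|(\tilde L_\epsilon-L_\epsilon)r\|_{L^2}\le \epsilon^2\|\nabla\log V\|_{L^\infty}\|\nabla r\|_{L^2}+\tfrac{\epsilon^2}{2}\|\Delta\log V\|_{L^\infty}\|r\|_{L^2},
\]
and then \eqref{resolvent-1} gives $\|\tilde L_\epsilon^{-1}(\tilde L_\epsilon-L_\epsilon)r\|_{H^1}\le C\epsilon\,\|r\|_{H^1}$. Note the rate is $\epsilon$, not the $\epsilon^2$ you record; your testing-against-$(-\Delta v)$ argument does not avoid the second derivative of $r$ after integration by parts, so the claimed $\epsilon^2$ bound is unjustified. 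Fortunately $\epsilon$ already suffices for the contraction. In short, the missing idea is to trade an $\epsilon^{-1}$ from \eqref{resolvent-1} against the extra powers of $\epsilon$ carried by the quadratic nonlinearity and by the $\epsilon^2$-prefactor in $\tilde L_\epsilon-L_\epsilon$, rather than forcing every term through $H^1$.
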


\begin{proof}
We will prove that under the assumptions of the theorem, operator $\Phi
_{\epsilon }$ is a contraction on the ball $B_{\delta }(H^{1}(\mathbb{R}%
^{d}))$ of radius $\delta $ if $\delta =\mathcal{O}(\epsilon ^{2})$ as $%
\epsilon \rightarrow 0$.

From the assumption of $\nabla \log (V)\in L^{\infty }(\mathbb{R}^{d})\cap
H^{2}(\mathbb{R}^{d})$, we realize that $F\in H^{1}(\mathbb{R}^{d})$.
Applying the bound (\ref{resolvent-2}), we obtain that, for $\epsilon >0$
sufficiently small, there is $C_{1}>0$ such that
\begin{equation}
\Vert \epsilon ^{2}\tilde{L}_{\epsilon }^{-1}F\Vert _{H^{1}}\leq \epsilon
^{2}C_{1}.
\end{equation}%
By Sobolev's embedding of $H^{1}(\mathbb{R}^{d})$ to $L^{p}(\mathbb{R}^{d})$
for any $p\geq 2$ if $d=1$, for $2\leq p<\infty $ if $d=2$, and $2\leq p\leq
6$ if $d=3$, and by the estimate (\ref{resolvent-1}), we obtain that, for $%
\epsilon >0$ sufficiently small, there is $C_{2}>0$ such that
\begin{equation}
\Vert \tilde{L}_{\epsilon }^{-1}N(r)\Vert _{H^{1}}\leq \epsilon ^{-1}\Vert
N(r)\Vert _{L^{2}}\leq \epsilon ^{-1}(3\Vert r\Vert _{L^{4}}^{2}+\Vert
r\Vert _{L^{6}}^{3})\leq C_{2}\epsilon ^{-1}(\Vert r\Vert _{H^{1}}^{2}+\Vert
r\Vert _{H^{1}}^{3}).
\end{equation}%
Finally, under the conditions of $\nabla \log (V),\Delta \log (V)\in
L^{\infty }(\mathbb{R}^{d})$, we have the bounds
\begin{equation}
\Vert (\tilde{L}_{\epsilon }-L_{\epsilon })u\Vert _{L^{2}}\leq \epsilon
^{2}\Vert \nabla \log (V)\Vert _{L^{\infty }}\Vert \nabla u\Vert _{L^{2}}+%
\frac{1}{2}\epsilon ^{2}\Vert \Delta \log (V)\Vert _{L^{\infty }}\Vert
u\Vert _{L^{2}},
\end{equation}%
hence, using estimate (\ref{resolvent-1}), we obtain that, for $\epsilon >0$
sufficiently small, there is $C_{3}>0$ such that
\begin{equation}
\Vert \tilde{L}_{\epsilon }^{-1}(\tilde{L}_{\epsilon }-L_{\epsilon })u\Vert
_{H^{1}}\leq \epsilon ^{-1}\Vert (\tilde{L}_{\epsilon }-L_{\epsilon })u\Vert
_{L^{2}}\leq \epsilon C_{3}\Vert u\Vert _{H^{1}}.
\end{equation}%
From these three estimates, it is clear that $\Phi _{\epsilon }$ maps a ball
$B_{\delta }(H^{1}(\mathbb{R}^{d}))$ of radius $\delta =C_{0}\epsilon ^{2}$
to itself, where $C_{0}>C_{1}$ independently of $\epsilon $. Similar
estimates on the Lipschitz continuous nonlinear term $N(r)$ and perturbation
operator $\tilde{L}_{\epsilon }^{-1}(\tilde{L}_{\epsilon }-L_{\epsilon })$
show that $\Phi _{\epsilon }$ is a contraction on the ball $B_{\delta
}(H^{1}(\mathbb{R}^{d}))$ of radius $\delta =C_{0}\epsilon ^{2}$. Hence, the
assertion of the theorem follows from the Banach fixed-point theorem.
\end{proof}

\begin{remark}
Sobolev's embedding of $H^s(\mathbb{R}^d)$ to $L^{\infty}(\mathbb{R}^d)$ for
$s > \frac{d}{2}$ allows us to replace the three conditions of the theorem
by only one condition: $\nabla \log(V) \in H^2(\mathbb{R})$ if $d = 1$ and $%
\nabla \log(V) \in H^3(\mathbb{R}^d)$ if $d = 2$ or $d = 3$. With this refinement,
Theorem \ref{theorem-fixed-point} follows from Lemma \ref{lemma-fixed-point} after
the decomposition $u = u_0(1 + r)$ is used.
\end{remark}

\begin{remark}
If the condition of the theorem are replaced by weaker conditions
\begin{equation}
\nabla \log (V),\Delta \log (V)\in L^{\infty }(\mathbb{R}^{d})\cap L^{2}(%
\mathbb{R}^{d}),
\end{equation}%
then, applying bound (\ref{resolvent-1}), we obtain
\begin{equation}
\Vert \epsilon ^{2}\tilde{L}_{\epsilon }^{-1}F\Vert _{H^{1}}\leq \epsilon
\Vert F\Vert _{L^{2}},
\end{equation}%
but radius $\delta =C_{0}\epsilon $ is critical for the contraction of
mapping $\Phi _{\epsilon }$ because of the quadratic term in $%
N(r)$. Hence, fixed-point arguments can only be used if $\Vert F\Vert _{L^{2}}$
is sufficiently small.
\end{remark}

In the end of this article, we discuss several examples.

\begin{itemize}
\item If $V$ grows algebraically at infinity with any rate $\alpha >0$, that is, if
$$
V(x)\sim |x|^{\alpha } \quad \mbox{\rm as} \quad |x|\rightarrow \infty
$$
(the nonlinear confinement of this kind was adopted in Ref. \cite{Barcelona1}), then
$$
|\nabla \log (V)|\sim |x|^{-1} \quad \mbox{\rm and} \quad \Delta \log (V)|\sim |x|^{-2}.
$$
Assuming smoothness of $V$, these conditions provide $F\in H^{1}(\mathbb{R}%
^{d})$ if $d=1,2,3$, hence Theorem \ref{theorem-fixed-point} holds for such potentials for any $%
\alpha >0$. (Of course, $u\in L^{2}(\mathbb{R}^{d})$ if and only if $\alpha >d$).
Some exact expressions are available for particular $V$ and $\epsilon$ \cite{Barcelona4}.

\item If $V$ grows like the exponential or Gaussian funciton (such as in
the models introduced in Refs. \cite{Barcelona2,Barcelona5}),
then the assumption $F\in H^{1}(\mathbb{R}^{d})$ fails for any $d=1,2,3$.
Nevertheless, if $V = (1 + \beta |x|^2) e^{\alpha |x|^2}$ with $\alpha, \beta > 0$, then
the analytic expression is available \cite{Barcelona2} for a particular value of $\epsilon = \epsilon_0$:
\begin{equation}
\label{exact-solution}
u = \frac{\epsilon \alpha}{\sqrt{\beta}} e^{-\frac{\alpha}{2}|x|^2},  \quad
\epsilon_0 = \frac{\sqrt{\beta}}{\sqrt{\alpha^2 + d \alpha \beta}}.
\end{equation}
However, because $F \notin H^1(\mathbb{R}^d)$ ($F$ is not even bounded at infinity),
it is not clear if there exists a family of stationary states for any $\epsilon \in (0,\epsilon_0)$
that connects the TF approximation (\ref{TF-approximation}) and the exact solution (\ref{exact-solution}).

\item If $V$ is a symmetric double-well potential, then Theorem \ref%
{theorem-fixed-point} justifies the construction of a symmetric
stationary state $u$. Symmetry-breaking bifurcation may happen in
double-well potentials, but it cannot happen to the symmetric state due to
uniqueness arguments. Therefore, such a bifurcation may only happen to an
anti-symmetric stationary state.
\end{itemize}

In conclusion, we have produced a rigorous proof of the proximity of the
self-trapped states, produced by the TF\ (Thomas-Fermi) approximation in the
recently developed models with the spatially growing local strength of the
defocusing cubic nonlinearity, to the true ground state, in the space of any
dimension, $d=1,2,3$. As an extension of the analysis, it may be interesting
to justify the empiric use of the TF approximation for the description of
self-trapped modes with intrinsic vorticity (by themselves, they are not
ground states, but may play such a role in the respective reduced radial
models \cite{Barcelona3,Barcelona5,IM}). Another relevant extension can be developed for
two-component models with the nonlinear confinement of the same type \cite%
{Barcelona-2comp}.

\vspace{0.5cm}

{\bf Acknowledgments:} The work of D.P. is supported by the Ministry of Education
and Science of Russian Federation (the base part of the state task No. 2014/133).

\end{document}